\documentclass[journal]{IEEEtran}

\usepackage{amsmath,amssymb,mathtools,bm}
\usepackage{soul} 
\usepackage{array,booktabs}
\usepackage{makecell}
\usepackage{algorithmic}
\usepackage{algorithm}
\usepackage[hidelinks]{hyperref}
\usepackage{xcolor}
\newcommand{\revise}[1]{#1}
\newtheorem{lemma}{Lemma}

\newtheorem{theorem}{Theorem}

\title{A Necessary and Sufficient Condition for Exact Relaxation of Energy Storage Power Dispatch}

\author{Shan~Liu,~\IEEEmembership{Student Member,~IEEE},
Zhigang~Li,~\IEEEmembership{Senior Member,~IEEE},
and Ye~Guo,~\IEEEmembership{Senior Member,~IEEE}
\thanks{S. Liu and Z. Li are with the School of Science and Engineering,
The Chinese University of Hong Kong (Shenzhen), Shenzhen, China. Y. Guo is with the Department of Building Environment and Energy Engineering,
The Hong Kong Polytechnic University,
Hong Kong, China. (Correspondence to Zhigang Li. e-mail: lizg@cuhk.edu.cn)}}
\begin{document}

\bstctlcite{IEEEexample:BSTcontrol}

\maketitle

\begin{abstract}
Mixed-integer programming (MIP) precludes simultaneous charging and
discharging in power dispatch of energy storage system (ESS),
but is computationally intensive. Relaxation of this complementarity constraint improves
computational efficiency, but may incur an impractical control SCD strategy. To bridge this gap, this letter proposes a necessary and sufficient
condition for the exactness of such relaxation. We explicitly derive the relaxation gap and reveal that SCD does not necessarily imply a
distinguished optimal value from the MIP model. In such cases, a feasible solution without SCD can be readily recovered without compromising
the optimal value. Furthermore, a two-stage
relaxation-and-recovery algorithm is proposed to improve the computational efficiency of ESS power dispatch optimization.
\end{abstract}
\vspace{-1em}
\begin{IEEEkeywords}
Energy storage, complementarity constraint, exact
relaxation, power dispatch.
\end{IEEEkeywords}
\vspace{-1em}
\section{Introduction}
Complementarity constraints are conventionally included in energy storage system (ESS) models to exclude
impractical simultaneous charging and discharging (SCD) by formulating an mixed-integer programming (MIP) problem, which is acknowledged to
be computationally expensive \cite{lianganalytical2026}. Without introducing binary variables,
direct relaxation of the complementarity constraints avoids this dilemma effectively \cite{Pozo2022linear,Yildiran2023Robust}. However, such relaxation is not
necessarily exact, i.e., the relaxed model may yield an impractical SCD solution with a more optimistic optimal value than that of the
original problem. Moreover, the existing relevant literatures do not
theoretically distinguish whether the SCD of a relaxed solution implies a
nonzero relaxation gap or only a removable degeneracy. To bridge these gaps, we establish a necessary and sufficient condition for the exact relaxation of ESS power dispatch problems with complementarity constraints. \revise{We further analyze the impact of SCD on optimality and propose a two-stage relaxation-and-recovery algorithm to improve computational efficiency.}
\section{Problem Formulation and Analysis}
Consider the power dispatch problem of a single ESS over periods
$\mathcal{T}:=\{1,\ldots,T\}$
\begin{subequations}
\label{eq:original}
\begin{align}
Q:=\min\;
&
\sum_{t\in\mathcal{T}}
\left(
C_t^{\mathrm{CH}}p_t^{\mathrm{ch}}
+
C_t^{\mathrm{DC}}p_t^{\mathrm{dc}}
\right)
\label{eq:original_obj}\\
\mathrm{s.t.}\;
&
e_t=e_{t-1}
+\eta^{\mathrm{ch}}p_t^{\mathrm{ch}}\Delta t
-p_t^{\mathrm{dc}}\Delta t/\eta^{\mathrm{dc}},
\; t\in\mathcal{T},
\label{eq:soc_dyn}\\
&
\revise{0\le p_t^{\mathrm{ch}}\le u_tP_{\mathrm{ch}}^{\max},
\quad t\in\mathcal{T}},
\label{eq:ch_limit}\\
&
\revise{0\le p_t^{\mathrm{dc}}\le (1-u_t)P_{\mathrm{dc}}^{\max},
\quad t\in\mathcal{T}},
\label{eq:dc_limit}\\
&
u_t\in\{0,1\},
\quad t\in\mathcal{T},
\label{eq:binary}\\
&
\underline e_t\le e_t\le \overline e_t,
\quad t\in\mathcal{T},
\label{eq:soc_bound}\\
&
e_T=e_0.
\label{eq:terminal}
\end{align}
\end{subequations}
Here, $C_t^{\mathrm{CH}}$ and $C_t^{\mathrm{DC}}$ denote the
charging and discharging cost coefficients, respectively;
$\eta^{\mathrm{ch}},\eta^{\mathrm{dc}}\in(0,1]$ are the
corresponding efficiencies; \revise{
$P_{\mathrm{ch}}^{\max}>0$ and $P_{\mathrm{dc}}^{\max}>0$
are the charging and discharging power limits, respectively;
$\Delta t>0$ is the dispatch interval; and $e_0$ is the
initial state of charge (SoC). For notational compactness, define $\alpha:=P_{\mathrm{ch}}^{\max}/P_{\mathrm{dc}}^{\max}$.}

Let $Q^{\mathrm{R}}$ denote the optimal value of the 
relaxed problem where \eqref{eq:binary} is replaced with
$0\le u_t\le1$. Equivalently, constraints
\eqref{eq:ch_limit}--\eqref{eq:dc_limit} are recast to
\revise{\begin{equation}
p_t^{\mathrm{ch}}\geq0,\qquad
p_t^{\mathrm{dc}}\geq0,\qquad
p_t^{\mathrm{ch}}+\alpha p_t^{\mathrm{dc}}
\leq P_{\mathrm{ch}}^{\max}.
\label{eq:relaxed_power_region}
\end{equation}}
Indeed, multiplying the upper bound in \eqref{eq:dc_limit}
by $\alpha$ and adding it to \eqref{eq:ch_limit} yields the
last inequality in \eqref{eq:relaxed_power_region}. Conversely,
if \eqref{eq:relaxed_power_region} holds, setting
$u_t=p_t^{\mathrm{ch}}/P_{\mathrm{ch}}^{\max}\in[0,1]$ satisfies all constraints of the relaxed problem. Since the relaxed feasible set contains the original
feasible set, $Q^{\mathrm{R}}\le Q$; the relaxation is called
\emph{optimal-value exact} if $Q^{\mathrm{R}}=Q$. Define
$\eta:=\eta^{\mathrm{ch}}\eta^{\mathrm{dc}}$. We next characterize all
relaxed charging/discharging actions producing a fixed SoC transition.

\emph{Complementary action.} For a prescribed SoC transition, define
$\Delta e_t:=e_t-e_{t-1}$.
The complementary charging and discharging powers that realize this
transition are
\begin{equation}
p_t^{\mathrm{ch},0}:=
[\Delta e_t]^+/(\eta^{\mathrm{ch}}\Delta t),
\;
p_t^{\mathrm{dc},0}:=
\eta^{\mathrm{dc}}[-\Delta e_t]^+/\Delta t,
\label{eq:base_powers}
\end{equation}
respectively, where $[x]^+:=\max\{x,0\}$.

\begin{lemma}[Zero-SoC Cycling Decomposition]
\label{lem:zero_soc_decomposition}
Fix a feasible SoC transition $\Delta e_t$. Any relaxed pair
$(p_t^{\mathrm{ch}},p_t^{\mathrm{dc}})$ satisfying
\eqref{eq:soc_dyn} for this transition and
\eqref{eq:relaxed_power_region} can be uniquely represented as
\begin{align}
p_t^{\mathrm{ch}}
&=p_t^{\mathrm{ch},0}+z_t,
\label{eq:decomp_ch}\\
p_t^{\mathrm{dc}}
&=p_t^{\mathrm{dc},0}+\eta z_t,
\label{eq:decomp_dc}
\end{align}
for a scalar $z_t$ satisfying
\begin{equation}
0\le z_t\le \overline z_t(\Delta e_t),
\label{eq:z_bound}
\end{equation}
\end{lemma}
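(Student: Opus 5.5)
We would prove the lemma by fixing $t$ and $\Delta e_t$ and working only with the single linear equation \eqref{eq:soc_dyn} together with the polyhedral constraints \eqref{eq:relaxed_power_region}. Rewrite \eqref{eq:soc_dyn} as
\begin{equation*}
\eta^{\mathrm{ch}}p_t^{\mathrm{ch}}-p_t^{\mathrm{dc}}/\eta^{\mathrm{dc}}=\Delta e_t/\Delta t .
\end{equation*}
This is one affine equation in two unknowns. Its solution set in the $(p_t^{\mathrm{ch}},p_t^{\mathrm{dc}})$ plane is therefore a line. The direction vector of the line is obtained from $\eta^{\mathrm{ch}}d^{\mathrm{ch}}=d^{\mathrm{dc}}/\eta^{\mathrm{dc}}$, which gives $(d^{\mathrm{ch}},d^{\mathrm{dc}})=(1,\eta)$ with $\eta=\eta^{\mathrm{ch}}\eta^{\mathrm{dc}}$.

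The first step is to check that the complementary action \eqref{eq:base_powers} lies on this line. If $\Delta e_t\ge0$, then $\eta^{\mathrm{ch}}p_t^{\mathrm{ch},0}=\Delta e_t/\Delta t$ and $p_t^{\mathrm{dc},0}=0$. If $\Delta e_t<0$, then $p_t^{\mathrm{ch},0}=0$ and $p_t^{\mathrm{dc},0}/\eta^{\mathrm{dc}}=-\Delta e_t/\Delta t$. In both cases the equation holds.

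It follows that every solution of the equation can be written as $(p_t^{\mathrm{ch},0},p_t^{\mathrm{dc},0})+z_t(1,\eta)$ for some $z_t\in\mathbb{R}$. The scalar is uniquely determined, because it must equal $z_t=p_t^{\mathrm{ch}}-p_t^{\mathrm{ch},0}$. This establishes \eqref{eq:decomp_ch}--\eqref{eq:decomp_dc} together with uniqueness.

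The second step is to translate \eqref{eq:relaxed_power_region} into bounds on $z_t$.

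\emph{Lower bound.} Nonnegativity gives $p_t^{\mathrm{ch},0}+z_t\ge0$ and $p_t^{\mathrm{dc},0}+\eta z_t\ge0$. The base point is complementary, so at least one of $p_t^{\mathrm{ch},0}$ and $p_t^{\mathrm{dc},0}$ is zero. Since $\eta>0$, whichever component is zero forces $z_t\ge0$. Conversely, $z_t\ge0$ clearly makes both components nonnegative. Hence nonnegativity is equivalent to $z_t\ge0$. This is the key point of the argument: complementarity of the base point is exactly what makes the lower bound $0$ rather than something negative.

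\emph{Upper bound.} Substituting the decomposition into the coupled limit gives
\begin{equation*}
p_t^{\mathrm{ch},0}+\alpha p_t^{\mathrm{dc},0}+(1+\alpha\eta)z_t\le P_{\mathrm{ch}}^{\max}.
\end{equation*}
This is equivalent to $z_t\le \overline z_t(\Delta e_t)$ with
\begin{equation*}
\overline z_t(\Delta e_t):=\frac{P_{\mathrm{ch}}^{\max}-p_t^{\mathrm{ch},0}-\alpha p_t^{\mathrm{dc},0}}{1+\alpha\eta}.
\end{equation*}
Written out, this is $\bigl(P_{\mathrm{ch}}^{\max}-[\Delta e_t]^+/(\eta^{\mathrm{ch}}\Delta t)-\alpha\eta^{\mathrm{dc}}[-\Delta e_t]^+/\Delta t\bigr)/(1+\alpha\eta)$. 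I expect the paper defines $\overline z_t$ exactly this way, presumably right after the statement.

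The main obstacle is the claim that the interval $[0,\overline z_t]$ is nonempty. This needs $\overline z_t\ge0$, which is precisely where the hypothesis ``feasible transition'' is used. Feasibility means some pair satisfying both \eqref{eq:soc_dyn} and \eqref{eq:relaxed_power_region} exists. By the argument above, any such pair has some $z_t\ge0$ with $(1+\alpha\eta)z_t\le P_{\mathrm{ch}}^{\max}-p_t^{\mathrm{ch},0}-\alpha p_t^{\mathrm{dc},0}$. Therefore $\overline z_t\ge z_t\ge0$, and in particular $z_t=0$ (the complementary action itself) is admissible. The converse direction is immediate: every $z_t\in[0,\overline z_t]$ yields a pair satisfying \eqref{eq:soc_dyn} and \eqref{eq:relaxed_power_region}. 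Together these show the representation is a bijection between relaxed actions and the interval \eqref{eq:z_bound}.
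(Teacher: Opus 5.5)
Your proof is correct and follows essentially the same route as the paper. The paper splits on the sign of $\Delta e_t$ and reads off $z_t=p_t^{\mathrm{dc}}/\eta$ or $z_t=p_t^{\mathrm{ch}}$ to get $z_t\ge0$, which is exactly your observation that the zero component of the complementary base point forces nonnegativity; the upper-bound substitution and the uniqueness argument are identical, and your line parametrization with direction $(1,\eta)$, the converse direction, and the check $\overline z_t(\Delta e_t)\ge0$ are harmless additions that the paper leaves implicit.
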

where
\revise{\begin{equation}
\revise{
\bar z_t(\Delta e_t):=
\bigl(P_{\mathrm{ch}}^{\max}
-p_t^{\mathrm{ch},0}
-\alpha p_t^{\mathrm{dc},0}\bigr)/(1+\alpha\eta)
} 
\label{eq:zbar}
\end{equation}}
\vspace{-1em}
\begin{IEEEproof}
If $\Delta e_t\ge0$, then $p_t^{\mathrm{dc},0}=0$ and
$p_t^{\mathrm{ch},0}=\Delta e_t/(\eta^{\mathrm{ch}}\Delta t)$. Since
$(p_t^{\mathrm{ch}},p_t^{\mathrm{dc}})$ produces the same transition,
\begin{equation*}
\Delta e_t=\eta^{\mathrm{ch}}p_t^{\mathrm{ch}}\Delta t
-\frac{p_t^{\mathrm{dc}}\Delta t}{\eta^{\mathrm{dc}}}
=
\eta^{\mathrm{ch}}p_t^{\mathrm{ch},0}\Delta t.
\end{equation*}
Rearranging gives
$p_t^{\mathrm{dc}}
=
\eta\left(p_t^{\mathrm{ch}}-p_t^{\mathrm{ch},0}\right).$
Thus
$z_t:=p_t^{\mathrm{ch}}-p_t^{\mathrm{ch},0}=p_t^{\mathrm{dc}}/\eta
\ge0$, and \eqref{eq:decomp_ch}--\eqref{eq:decomp_dc} follow.

If $\Delta e_t<0$, then $p_t^{\mathrm{ch},0}=0$. Equalities \eqref{eq:soc_dyn} and \eqref{eq:base_powers} give
$p_t^{\mathrm{dc}}
=
p_t^{\mathrm{dc},0}
+\eta p_t^{\mathrm{ch}}.$
Defining $z_t:=p_t^{\mathrm{ch}}\ge0$ yields
\eqref{eq:decomp_ch}--\eqref{eq:decomp_dc}.

The additional pair $(z_t,\eta z_t)$ produces zero SoC change since
$\eta^{\mathrm{ch}}z_t\Delta t-\eta z_t\Delta t/\eta^{\mathrm{dc}}=0$.
Substituting \eqref{eq:decomp_ch}--\eqref{eq:decomp_dc} into
\eqref{eq:relaxed_power_region} gives
\begin{equation*}
p_t^{\mathrm{ch},0}
+\alpha p_t^{\mathrm{dc},0}
+(1+\alpha\eta)z_t
\leq P_{\mathrm{ch}}^{\max},
\end{equation*}
which is equivalent to \eqref{eq:z_bound}--\eqref{eq:zbar}.
Uniqueness follows from
$z_t=p_t^{\mathrm{ch}}-p_t^{\mathrm{ch},0}
=(p_t^{\mathrm{dc}}-p_t^{\mathrm{dc},0})/\eta$.
\end{IEEEproof}

The scalar $z_t$ measures the zero-SoC cycling amount, i.e., the simultaneous charging and discharging component that leaves the SoC transition unchanged. $\overline z_t(\Delta e_t)$ is the maximum cycling amount permitted by the remaining power headroom for the prescribed SoC transition. Moreover, $\bar z_t(\Delta e_t)=0$ if and only if the
transition uses the full charging or discharging power capacity,
equivalently, \revise{$\Delta e_t=\eta^{\mathrm{ch}}
P_{\mathrm{ch}}^{\max}\Delta t$ or
$\Delta e_t=-P_{\mathrm{dc}}^{\max}\Delta t/
\eta^{\mathrm{dc}}$;} in this case, SCD is physically impossible
for that transition.
\vspace{-1em}
\section{Necessary and Sufficient Conditions for Relaxation Exactness}

This section first characterize stagewise relaxation exactness for a fixed SoC transition, and then lift the result to the full horizon. For a fixed SoC transition $\Delta e_t$, define the complementary stage
cost
$c_t(\Delta e_t)
:=
C_t^{\mathrm{CH}}p_t^{\mathrm{ch},0}
+
C_t^{\mathrm{DC}}p_t^{\mathrm{dc},0}.$
Using \eqref{eq:decomp_ch}--\eqref{eq:decomp_dc},
\begin{equation*}
C_t^{\mathrm{CH}}p_t^{\mathrm{ch}}
+
C_t^{\mathrm{DC}}p_t^{\mathrm{dc}}
=c_t(\Delta e_t)
+(C_t^{\mathrm{CH}}+\eta C_t^{\mathrm{DC}})z_t.
\end{equation*}
Define the marginal cycling cost
\begin{equation}
\kappa_t^0
:=
C_t^{\mathrm{CH}}
+
\eta C_t^{\mathrm{DC}}.
\label{eq:kappa_t}
\end{equation}
For the fixed transition $\Delta e_t$, the original problem forces
$z_t=0$, while the relaxation chooses $z_t$ over the interval
\eqref{eq:z_bound}. Hence the relaxed stage value is
\begin{align}
c_t^{\mathrm{R}}(\Delta e_t)
&:=
c_t(\Delta e_t)
+
\min_{0\le z_t\le \overline z_t(\Delta e_t)}
\kappa_t^0 z_t
\nonumber\\
&=
c_t(\Delta e_t)
-
[-\kappa_t^0]^+
\overline z_t(\Delta e_t).
\label{eq:relaxed_stage_closed}
\end{align}
Thus $c_t^{\mathrm{R}}(\Delta e_t)=c_t(\Delta e_t)$ if and only if
\begin{equation}
\kappa_t^0\ge 0
\quad\text{or}\quad
\overline z_t(\Delta e_t)=0.
\label{eq:stage_exact_condition}
\end{equation}

We next lift this stagewise characterization to the full horizon using
the feasible SoC trajectory set. The original and relaxed problems have
the same projection onto the SoC
trajectory space. Let
\begin{equation*}
\begin{aligned}
\mathcal{E}:=\bigl\{&
\bm e=(e_0,\ldots,e_T):\;
\underline e_t\le e_t\le \overline e_t,\quad t\in\mathcal{T},
\\
&
\revise{-P_{\mathrm{dc}}^{\max}\Delta t/\eta^{\mathrm{dc}}
\le e_t-e_{t-1}
\le \eta^{\mathrm{ch}}P_{\mathrm{ch}}^{\max}\Delta t,
\quad t\in\mathcal{T}},
\\
&
e_T=e_0
\bigr\},
\end{aligned}
\label{eq:E_set}
\end{equation*}
where the initial SoC $e_0$ is fixed. Indeed, every feasible solution of
either model has an SoC trajectory in $\mathcal{E}$. Conversely, for
any $\bm e\in\mathcal{E}$, define the complementary powers by
\eqref{eq:base_powers}. The transition bounds in $\mathcal E$ imply
$0\leq p_t^{\mathrm{ch},0}\leq
P_{\mathrm{ch}}^{\max}$ and
$0\leq p_t^{\mathrm{dc},0}\leq
P_{\mathrm{dc}}^{\max}$. Moreover,
$p_t^{\mathrm{ch},0}p_t^{\mathrm{dc},0}=0$, so one can choose a binary
mode $u_t$ to satisfy \eqref{eq:ch_limit}--\eqref{eq:binary}. Hence
$\bm e$ is implementable in the original problem.

Consequently, the two optimal values can be written as
\begin{align*}
Q
&=
\min_{\bm e\in\mathcal{E}}
\sum_{t\in\mathcal{T}}
c_t(e_t-e_{t-1}),
\\
Q^{\mathrm{R}}
&=
\min_{\bm e\in\mathcal{E}}
\sum_{t\in\mathcal{T}}
c_t^{\mathrm{R}}(e_t-e_{t-1}).
\end{align*}
\vspace{-1em}
\begin{theorem}[Necessary and Sufficient Exactness Condition] \label{thm:exactness}
The relaxation is optimal-value exact, i.e.,
$Q^{\mathrm{R}}=Q$, if and only if the relaxed problem admits an
optimal SoC trajectory $\bm e^*=(e_0^*,\ldots,e_T^*)\in\mathcal{E}$
such that
\begin{equation}
[-\kappa_t^0]^+
\overline z_t(e_t^*-e_{t-1}^*)=0,
\quad
\forall t\in\mathcal{T}.
\label{eq:full_exact_condition}
\end{equation}
Equivalently, for every $t\in\mathcal{T}$, $\kappa_t^0\ge0$ or
$\overline z_t(e_t^*-e_{t-1}^*)=0$.
\end{theorem}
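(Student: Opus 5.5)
The plan is to work entirely with the reduced SoC-space representations of $Q$ and $Q^{\mathrm{R}}$ derived just before the statement, together with the stagewise identity \eqref{eq:relaxed_stage_closed}. That identity gives, for every $\bm e\in\mathcal{E}$,
\begin{equation*}
\sum_{t\in\mathcal{T}} c_t^{\mathrm{R}}(e_t-e_{t-1})
=\sum_{t\in\mathcal{T}} c_t(e_t-e_{t-1})
-\sum_{t\in\mathcal{T}}[-\kappa_t^0]^+\,\overline z_t(e_t-e_{t-1}).
\end{equation*}
By Lemma~\ref{lem:zero_soc_decomposition} and the remark following it, each subtracted term is nonnegative, since $\overline z_t\ge0$ on $\mathcal{E}$. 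Hence the relaxed objective is pointwise bounded above by the original one on $\mathcal{E}$, with equality at $\bm e$ exactly when every term $[-\kappa_t^0]^+\overline z_t(e_t-e_{t-1})$ vanishes.

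\emph{Sufficiency.} Let $\bm e^*$ be a relaxed-optimal trajectory satisfying \eqref{eq:full_exact_condition}. Then
\[
Q^{\mathrm{R}}=\sum_t c_t^{\mathrm{R}}(e_t^*-e_{t-1}^*)=\sum_t c_t(e_t^*-e_{t-1}^*)\ge Q,
\]
where the inequality holds because $\bm e^*\in\mathcal{E}$ is feasible for the minimization defining $Q$. Combined with $Q^{\mathrm{R}}\le Q$, this gives equality.

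\emph{Necessity.} Assume $Q^{\mathrm{R}}=Q$. I will take $\bm e^*$ to be an optimal SoC trajectory of the \emph{original} reduced problem, which I expect to be the only subtle step. Existence can be argued in either of two ways. First, $\mathcal{E}$ is a nonempty compact polytope and each $c_t$ is continuous (piecewise linear in $\Delta e_t$), so a minimizer exists. Alternatively, the original MIP is a finite union of LPs, one per binary pattern, each feasible and bounded whenever the problem is feasible. Then
\[
Q^{\mathrm{R}}\le\sum_t c_t^{\mathrm{R}}(e_t^*-e_{t-1}^*)\le\sum_t c_t(e_t^*-e_{t-1}^*)=Q=Q^{\mathrm{R}},
\]
so $\bm e^*$ is also relaxed-optimal, and the middle inequality is tight. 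Tightness means the sum of nonnegative gap terms is zero, so each term vanishes, which is exactly \eqref{eq:full_exact_condition}.

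The equivalent form of the condition follows because a product of two nonnegative factors vanishes iff one of them does, and $[-\kappa_t^0]^+=0$ iff $\kappa_t^0\ge0$. The main care needed is to confirm $\overline z_t\ge0$ on $\mathcal{E}$ and to justify attainment of the minima. The nonnegativity comes from $p_t^{\mathrm{ch},0}+\alpha p_t^{\mathrm{dc},0}\le P^{\max}_{\mathrm{ch}}$, which holds because only one of the two powers is nonzero and each lies within its limit.
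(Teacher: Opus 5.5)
Your proof is correct and follows the paper's argument essentially step for step. Sufficiency goes via $Q\le\sum_t c_t(\cdot)=\sum_t c_t^{\mathrm{R}}(\cdot)=Q^{\mathrm{R}}$, and necessity takes an optimal trajectory of the original problem and forces each nonnegative gap term $[-\kappa_t^0]^+\overline z_t$ to vanish. Your extra justifications (the minimum over the compact set $\mathcal{E}$ is attained, and $\overline z_t\ge0$ on $\mathcal{E}$) fill in points the paper leaves implicit. One small slip: in your alternative attainment argument, not every binary pattern need give a feasible LP, for instance with time-varying SoC bounds. This is harmless because your compactness argument already suffices.
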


\begin{IEEEproof}
\emph{1) Sufficiency:}
Suppose that the relaxed problem has an optimal SoC
trajectory $\bm e^*$ satisfying \eqref{eq:full_exact_condition}. By
\eqref{eq:relaxed_stage_closed},
$c_t^{\mathrm{R}}(e_t^*-e_{t-1}^*)=c_t(e_t^*-e_{t-1}^*)$ for all
$t\in\mathcal{T}$.
Thus
\begin{equation*}
Q
\le
\sum_{t\in\mathcal{T}}c_t(e_t^*-e_{t-1}^*)
=
\sum_{t\in\mathcal{T}}c_t^{\mathrm{R}}(e_t^*-e_{t-1}^*)
=
Q^{\mathrm{R}}.
\end{equation*}
Together with $Q^{\mathrm{R}}\le Q$, this gives $Q^{\mathrm{R}}=Q$.

\emph{2) Necessity:}
Suppose $Q^{\mathrm{R}}=Q$, and let
$\bm e^{\mathrm{O}}\in\mathcal{E}$ be an optimal SoC trajectory of the
original problem. Then
$\sum_{t\in\mathcal{T}}c_t(e_t^{\mathrm{O}}-e_{t-1}^{\mathrm{O}})=Q$.
The same trajectory is feasible for the relaxed projection, and
\eqref{eq:relaxed_stage_closed} implies
$c_t^{\mathrm{R}}(\Delta e_t)\le c_t(\Delta e_t)$ for every feasible
transition. Hence
\begin{equation*}
Q^{\mathrm{R}}
\le
\sum_{t\in\mathcal{T}}
c_t^{\mathrm{R}}(e_t^{\mathrm{O}}-e_{t-1}^{\mathrm{O}})
\le
\sum_{t\in\mathcal{T}}
c_t(e_t^{\mathrm{O}}-e_{t-1}^{\mathrm{O}})
=Q.
\end{equation*}
Since $Q^{\mathrm{R}}=Q$, both inequalities in
the preceding display are equalities. Therefore
$\bm e^{\mathrm{O}}$ is also optimal for the relaxed problem and
$\sum_{t\in\mathcal{T}}
[c_t(e_t^{\mathrm{O}}-e_{t-1}^{\mathrm{O}})
-c_t^{\mathrm{R}}(e_t^{\mathrm{O}}-e_{t-1}^{\mathrm{O}})]=0$.
Each summand is nonnegative, because by
\eqref{eq:relaxed_stage_closed},
\begin{equation*}
c_t(e_t^{\mathrm{O}}-e_{t-1}^{\mathrm{O}})
-
c_t^{\mathrm{R}}(e_t^{\mathrm{O}}-e_{t-1}^{\mathrm{O}})
=
[-\kappa_t^0]^+
\overline z_t(e_t^{\mathrm{O}}-e_{t-1}^{\mathrm{O}})
\ge0.
\end{equation*}
Hence every summand is zero; namely,
$[-\kappa_t^0]^+
\overline z_t(e_t^{\mathrm{O}}-e_{t-1}^{\mathrm{O}})=0$ for all
$t\in\mathcal{T}$. Thus, the relaxed problem admits the optimal trajectory
$\bm e^{\mathrm{O}}$ satisfying \eqref{eq:full_exact_condition}.
\end{IEEEproof}
\vspace{-1em}
\section{Physical Interpretation and Algorithm}
This section interprets the necessary and sufficient exactness condition
and proposes a two-stage relaxation-and-recovery algorithm for solving Problem~\eqref{eq:original}.

For an optimal relaxed SoC trajectory
$\bm e^\star\in\mathcal{E}$, the optimal value of the relaxed full-horizon problem is expressed by
\begin{equation}
	Q^{\mathrm{R}}
	=
	\min_{\bm e\in\mathcal{E}}
	\sum_{t=1}^{T}
	\left[
	c_t(\Delta e_t)
	+
	\min_{0\le z_t\le \overline z_t(\Delta e_t)}
	\kappa_t^0 z_t
	\right].
	\label{eq:relaxed_full_horizon}
\end{equation}
Hence the only possible gap from the original model is the zero-SoC
cycling term. Here, $\kappa_t^0$ is the marginal cycling cost, while
$\overline z_t(\Delta e_t^\star)$ is the available physical headroom
for such cycling. By Theorem~\ref{thm:exactness}, exactness along
$\bm e^\star$ is equivalent to
\begin{equation}
	\kappa_t^0\ge 0
	\quad\text{or}\quad
	\overline z_t(\Delta e_t^\star)=0,
	\qquad \forall t\in\mathcal{T}.
	\label{eq:physical_exactness_condition}
\end{equation}
Table~\ref{tab:stagewise} summarizes the possible cases. When
$\kappa_t^0>0$, cycling is strictly penalized and the relaxed optimum
satisfies $z_t^\star=0$. When $\kappa_t^0=0$ and $\overline z_t>0$, an
SCD solution may occur since all $z_t\in[0,\overline z_t]$ have the
same cost. The SCD component can be removed by setting $z_t=0$ in
\eqref{eq:decomp_ch}--\eqref{eq:decomp_dc}, so the returned SCD
solution is only a degenerate representation of an exact optimum.

If $\kappa_t^0<0$, the relaxed stage problem favors the largest
feasible cycling amount, $z_t^\star=\overline z_t(\Delta e_t^\star)$.
If $\overline z_t(\Delta e_t^\star)=0$, the selected transition already
uses the full charging or discharging power capacity, and the
relaxation remains exact. Otherwise, $z_t^\star>0$ and the relaxed
solution exploits profitable SCD. After solving the relaxed problem,
$z_t^\star=\min\{p_t^{\mathrm{ch},\star},
p_t^{\mathrm{dc},\star}/\eta\}$; a positive value indicates that binary
enforcement is required.
\vspace{-1em}
\begin{table}[!t]
	\caption{Stagewise Behavior of the Relaxed ESS Action}
	\label{tab:stagewise}
	\centering
	\footnotesize
	\setlength{\tabcolsep}{2pt}
	\renewcommand{\arraystretch}{1.12}
	\begin{tabular}{@{}>{\centering\arraybackslash}p{0.08\columnwidth}
			>{\centering\arraybackslash}p{0.08\columnwidth}
			>{\centering\arraybackslash}p{0.25\columnwidth}
			>{\centering\arraybackslash}p{0.18\columnwidth}
			>{\centering\arraybackslash}p{0.25\columnwidth}@{}}
		\toprule
		\makecell{$\kappa_t^0$}
		& \makecell{$\overline z_t$}
		& \makecell{Relaxed optimum\\$z_t^*$}
		& \makecell{SCD\\possible?}
		& \makecell{Exact for this\\SoC transition?}\\
		\midrule
		$>0$
		& Any
		& $z_t^*=0$
		& No
		& Yes\\
		$=0$
		& $>0$
		& Any $z_t^*\in[0,\overline z_t]$
		& Yes, possible
		& Yes, value exact\\
		$=0$
		& $=0$
		& $z_t^*=0$
		& No
		& Yes\\
		$<0$
		& $=0$
		& $z_t^*=0$
		& No
		& Yes\\
		$<0$
		& $>0$
		& $z_t^*=\overline z_t>0$
		& Yes
		& No\\
		\bottomrule
	\end{tabular}
	\vspace{-1em}
\end{table}

These observations motivate Algorithm~\ref{alg:two_stage_ess}. The
fully relaxed \revise{problem \eqref{eq:original}} is first solved to computes the
cycling amount $z_t$ at each period. If $\kappa_t^0>0$, SCD is
excluded in any optimum; if $\kappa_t^0=0$, SCD does not change the
objective and can be removed by setting $z_t=0$ in
\eqref{eq:decomp_ch}--\eqref{eq:decomp_dc}. Therefore, if no period in
$\mathcal{T}^{-}:=\{t\in\mathcal{T}:\kappa_t^0<0\}$ exhibits cycling,
the relaxed solution can be recovered directly. Otherwise, the problem
is resolved as an MIP with binary modes imposed only on
$\mathcal{T}^{-}$, followed by the same recovery at zero-cost periods.
Thus, at most one LP and one partial MIP problem need to be solved, and the resulting schedule is complementary and optimal.
\vspace{-1em}
\begin{algorithm}[]
	\caption{Two-Stage Relaxation-and-Recovery Method}
	\label{alg:two_stage_ess}
	\small
	\begin{algorithmic}[1]
		\REQUIRE ESS parameters
		\ENSURE \revise{Optimal complementary schedule$\{e_t^{\star},p_t^{\mathrm{ch},\star},
    p_t^{\mathrm{dc},\star}\}_{t\in\mathcal T}$}
		
		\STATE Compute \(\eta=\eta^{\mathrm{ch}}\eta^{\mathrm{dc}}\)
		and \(\kappa_t^0=C_t^{\mathrm{CH}}+\eta C_t^{\mathrm{DC}}\).
		\STATE Set \(\mathcal{T}^{-}=\{t\in\mathcal{T}:\kappa_t^0<0\}\)
		and \(\mathcal{T}^{0}=\{t\in\mathcal{T}:\kappa_t^0=0\}\).
		
		\STATE \revise{Solve the fully relaxed problem~\eqref{eq:original} to obtain
        $\{e_t,p_t^{\mathrm{ch}},p_t^{\mathrm{dc}}\}_{t\in\mathcal T}$.}
		\STATE Compute \(z_t=\min\{p_t^{\mathrm{ch}},p_t^{\mathrm{dc}}/\eta\}\),
		\(\forall t\in\mathcal{T}\), and set
		\(\mathcal{V}=\{t\in\mathcal{T}^{-}:z_t>0\}\).
		
		\IF{\(\mathcal{V}\neq\varnothing\)}
		\STATE \revise{Re-solve \eqref{eq:original} with binary $u_t$ on
        $\mathcal T^-$ and continuous $u_t$ otherwise to update
         $\{e_t,p_t^{\mathrm{ch}},p_t^{\mathrm{dc}}\}_{t\in\mathcal T}$.
		\STATE Recompute \(z_t\), \(\forall t\in\mathcal{T}^{0}\).}
		\ENDIF
		
		\FOR{\(t\in\mathcal{T}^{0}\) with \(z_t>0\)}
		\STATE Recover the complementary powers by
		\eqref{eq:base_powers} with \(\Delta e_t=e_t-e_{t-1}\)
		(equivalently, set \(z_t=0\) in
		\eqref{eq:decomp_ch}--\eqref{eq:decomp_dc}).
		\ENDFOR
		
		\RETURN \revise{The optimal schedule$\{e_t^{\star},p_t^{\mathrm{ch},\star},
    p_t^{\mathrm{dc},\star}\}_{t\in\mathcal T}$}
\end{algorithmic}
\end{algorithm}
\vspace{-2em}
\section{Numerical Experiments}
We consider a five-period ESS power dispatch with parameters
$\Delta t=1$, $\eta^{\mathrm{ch}}=\eta^{\mathrm{dc}}=0.95$,
$P_{\mathrm{ch}}^{\max}=P_{\mathrm{dc}}^{\max}=6$, $\underline e_t=0$, $\overline e_t=20$, and
$e_0=e_T=10$. The cost vectors are
$C^{\mathrm{CH}}=[10,10,2.5,1,2.7075]$ and
$C^{\mathrm{DC}}=[-5,-5,-5,5,-3]$, which give
$\kappa^0=[5.4875,5.4875,-2.0125,5.5125,0]$. All LP and MIP models are
solved by Gurobi. Algorithm~\ref{alg:two_stage_ess} is applied as
follows. In Step 1, the relaxed LP returns SCD at periods $t=3$ and
$t=5$, as shown in Table~\ref{tab:case2_lp}. Consistent with the
conclusion in Table~\ref{tab:stagewise}, since $\kappa_3^0<0$ and
$z_3>0$, the relaxation exploits profitable SCD that attains the lower
objective value $-35.5919$. In contrast, when $\kappa_5^0=0$, the SCD
at $t=5$ does not change the objective value, as indicated by
\eqref{eq:relaxed_stage_closed}. In Step 2, we impose a binary mode only
at $t=3$ and keep the other periods relaxed. The resulting partial MIP,
reported in Table~\ref{tab:case2_partial_mip}, has objective
$-30.4900$, matching the full MIP objective obtained by enforcing binary
modes in all periods. Although SCD remains at $t=5$, it can be removed
by \eqref{eq:base_powers}. Table~\ref{tab:case2_recovered} shows the
recovered complementary solution, whose objective remains $-30.4900$.
The experiment verifies the correctness of the proposed theory and algorithm.
\vspace{-1em}
\begin{table}[!t]
	\caption{Step 1: Relaxed LP Representative Solution}
	\label{tab:case2_lp}
	\centering
	\footnotesize
	\setlength{\tabcolsep}{3pt}
	\renewcommand{\arraystretch}{1.08}
	\begin{tabular}{@{}ccccccccc@{}}
		\toprule
		$t$ & $C_t^{\rm CH}$ & $C_t^{\rm DC}$ & $\kappa_t^0$ & $p_t^{\rm ch}$ & $p_t^{\rm dc}$ & $e_t$ & $z_t$ & SCD\\
		\midrule
		1 & 10 & -5 & 5.4875 & 0 & 3.5000 & 6.3158 & 0 & No\\
		2 & 10 & -5 & 5.4875 & 0 & 6 & 0 & 0 & No\\
		3 & 2.5000 & -5 & -2.0125 & 3.1537 & 2.8463 & 0 & 3.1537 & Yes\\
		4 & 1 & 5 & 5.5125 & 6 & 0 & 5.7000 & 0 & No\\
		5 & 2.7075 & -3 & 0 & 5.3009 & 0.6991 & 10 & 0.7746 & Yes\\
		\bottomrule
		\multicolumn{9}{@{}p{0.98\columnwidth}@{}}{\footnotesize
		$^{*}$ Units: costs and $\kappa_t^0$ in \$/MWh; powers and
		$z_t$ in MW; $e_t$ in MWh.}
	\end{tabular}
	\vspace{-1em}
\end{table}
\begin{table}[!t]
	\caption{Step 2: Partial MIP Solution with Binary Mode at $t=3$}
	\label{tab:case2_partial_mip}
	\centering
	\footnotesize
	\setlength{\tabcolsep}{3pt}
	\renewcommand{\arraystretch}{1.08}
	\begin{tabular}{@{}cccccccccc@{}}
		\toprule
		$t$ & $C_t^{\rm CH}$ & $C_t^{\rm DC}$ & $\kappa_t^0$ & $u_t$ & $p_t^{\rm ch}$ & $p_t^{\rm dc}$ & $e_t$ & $z_t$ & SCD\\
		\midrule
		1 & 10 & -5 & 5.4875 & -- & 0 & 4.7500 & 5 & 0 & No\\
		2 & 10 & -5 & 5.4875 & -- & 0 & 4.7500 & 0 & 0 & No\\
		3 & 2.5000 & -5 & -2.0125 & 1 & 6 & 0 & 5.7000 & 0 & No\\
		4 & 1 & 5 & 5.5125 & -- & 6 & 0 & 11.4000 & 0 & No\\
		5 & 2.7075 & -3 & 0 & -- & 2.4547 & 3.5453 & 10 & 2.4547 & Yes\\
		\bottomrule
	\end{tabular}
	\vspace{-1em}
\end{table}
\begin{table}[!t]
	\caption{Step 3: Recovered Complementary Solution}
	\label{tab:case2_recovered}
	\centering
	\footnotesize
	\setlength{\tabcolsep}{3pt}
	\renewcommand{\arraystretch}{1.08}
	\begin{tabular}{@{}ccccccccc@{}}
		\toprule
		$t$ & $C_t^{\rm CH}$ & $C_t^{\rm DC}$ & $\kappa_t^0$ & $p_t^{\rm ch}$ & $p_t^{\rm dc}$ & $e_t$ & $z_t$ & SCD\\
		\midrule
		1 & 10 & -5 & 5.4875 & 0 & 4.7500 & 5 & 0 & No\\
		2 & 10 & -5 & 5.4875 & 0 & 4.7500 & 0 & 0 & No\\
		3 & 2.5000 & -5 & -2.0125 & 6 & 0 & 5.7000 & 0 & No\\
		4 & 1 & 5 & 5.5125 & 6 & 0 & 11.4000 & 0 & No\\
		5 & 2.7075 & -3 & 0 & 0 & 1.3300 & 10 & 0 & No\\
		\bottomrule
\end{tabular}
	\vspace{-1em}
\end{table}
\section{Conclusion}

This letter provides a necessary and sufficient condition for
the relaxation exactness of a complementarity-constrained ESS power dispatch problem.
It shows that the relaxation gap is governed by the
marginal cycling cost $\kappa_t^0$ and the available cycling headroom
$\overline z_t$. In particular, the stagewise gap is
$[-\kappa_t^0]^+\overline z_t$, so SCD creates a true gap only when it
is both profitable and physically feasible. Therefore, the presence of
SCD in a relaxed solution does not necessarily mean that the relaxation
is inexact. When $\kappa_t^0=0$, SCD is only a degenerate zero-cost
cycling component, and setting $z_t=0$ recovers a complementary optimal
schedule with the same SoC trajectory and objective value. \revise{The proposed condition can guide solver design by identifying
unnecessary ESS mode binaries and enabling recovery of feasible solutions,
thereby improving computational efficiency. Future work will
integrate this theory into network-constrained power system optimization problems.}
\vspace{-1em}
\bibliographystyle{IEEEtran}
\bibliography{references}

\end{document}